\documentclass[a4paper,12pt]{article}

\usepackage[utf8]{inputenc}
\usepackage[backend=biber,style=alphabetic,maxbibnames=100,maxalphanames=4]{biblatex}
\usepackage[in]{fullpage}
\usepackage[dvipsnames,svgnames,x11names]{xcolor}
\usepackage[colorlinks]{hyperref}
\usepackage{enumerate}
\usepackage{nicefrac}
\usepackage{graphicx}
\usepackage{algorithm}
\usepackage{amsmath}
\usepackage{mathtools}
\usepackage{authblk}
\usepackage{amssymb}
\usepackage{amsthm}
\usepackage{esvect}
\usepackage[normalem]{ulem}
\usepackage{braket}
\usepackage{pgfplots}
\usepackage{multicol}
\usepackage[final]{microtype}
\usepackage[font=small]{caption}
\usetikzlibrary{calc}

\makeatletter
\newtheorem*{rep@theorem}{\rep@title}
\newcommand{\newreptheorem}[2]{%
\newenvironment{rep#1}[1]{%
 \def\rep@title{#2 \ref{##1}}%
 \begin{rep@theorem}}%
 {\end{rep@theorem}}}
\makeatother
\newreptheorem{theorem}{Theorem}

\pgfplotsset{compat=1.6}

\bibliography{bibliography}

\setcounter{biburlnumpenalty}{9000}
\setcounter{biburllcpenalty}{9000}
\setcounter{biburlucpenalty}{9000}

\emergencystretch=1em

\hypersetup{
    breaklinks = true,
	colorlinks = true,
	citecolor = teal,
	urlcolor = purple,
}

\newtheorem{theorem}{Theorem}
\newtheorem{lemma}[theorem]{Lemma}

\newtheorem{proposition}[theorem]{Proposition}

\DeclareMathOperator{\codim}{codim}
\DeclareMathOperator{\cert}{C}
\DeclareMathOperator{\sens}{s}
\DeclareMathOperator{\UC}{UC}
\DeclareMathOperator{\D}{D}

\DeclareMathOperator{\md}{\delta}

\DeclareMathOperator{\C}{C}

\DeclareMathOperator{\haf}{\textnormal{\textsc{HAF}}}
\DeclareMathOperator{\maf}{\textnormal{\textsc{MAF}}}

\DeclareMathOperator{\chaf}{\textnormal{\textsc{CHAF}}}

\newcommand{\norm}[1]{\left\lVert#1\right\rVert}

\title{Boolean Functions with Minimal Spectral Sensitivity}
\author{Krišjānis Prūsis}
\author{Jevgēnijs Vihrovs}

\affil{Centre for Quantum Computer Science, Faculty of Science and Technology,\authorcr University of Latvia, Raiņa 19, Riga, Latvia, LV-1050}

\date{}

\begin{document}

\maketitle

\begin{abstract}
    We show examples of total Boolean functions that depend on $n$ variables and have spectral sensitivity $\Theta(\sqrt{\log n})$, which is asymptotically minimal.
    Our main new function combines the Hamming code with the Boolean address function and has $\lambda(f) = \sqrt{(1+o(1)) \log_2 n}$, which is optimal even up to a constant factor.
    By combining this function with itself in a specific way, we also obtain a family of functions with $\sens_0(f) = (c+o(1)) \log_2 n$ and $\sens_0(f) = (1-c+o(1)) \log_2 n$ for any $c \in [0,1]$.
    This is an optimal tradeoff for Boolean functions with low sensitivity, as the lower bound on sensitivity by Simon generalizes to
    \[\sens_0(f)+\sens_1(f)\geq\log_2 n - \log_2 \log_2 n + 2.\]
    As a corollary, this gives a new example of a function with minimal possible sensitivity (up to a constant factor), $\sens(f) = (\frac{1}{2}+o(1)) \log_2 n$.
\end{abstract}

\section{Introduction}

We investigate the question of the minimum possible spectral sensitivity of a total Boolean function $f$ that depends on all $n$ variables.
This question has been answered for other major measures, with the following lower bounds and matching examples:\footnote{Throughout the paper, we use $\log n$ to denote the logarithm base $2$.}
\begin{itemize}
    \item sensitivity: $\sens(f) \geq \frac{1}{2} \log n - \frac{1}{2} \log \log n + \frac{1}{2}$ \cite{Sim83, Weg85};
    \item degree: $\deg(f) \geq \log n - O(\log \log n)$ \cite{NS94};
    \item approximate degree: $\widetilde \deg(f) \geq \log n / \log \log n - O(1)$ \cite{AdW14}.
\end{itemize}
These in a sense were the lowest measures known, as they lower bound other main complexity measures for Boolean functions: the deterministic, randomized and quantum query complexities, block sensitivity, certificate complexity, etc., at the same time being polynomially related to them \cite{ABKRT21}.
Lower bounds for these measures were further studied in \cite{Wel22}.

Spectral sensitivity $\lambda(f)$ is a relatively new measure used to prove the Sensitivity Conjecture \cite{Hua19}, which asymptotically lower bounds all of $\sens(f)$, $\deg(f)$ and $\widetilde \deg(f)$ \cite{ABKRT21}.
In this work we show functions with $\lambda(f) = \Theta(\sqrt{\log n})$, which is asymptotically minimal.

In order to lower bound the minimal spectral sensitivity, we will use the following known relations. On the one hand, we have that the degree of a non-degenerate function is at least logarithmic:
\begin{lemma}[Theorem 1 in \cite{NS94}] \label{deg-bound}
For any non-degenerate Boolean function $f$,
$$\deg(f) \geq \log n - O( \log\log n).$$
\end{lemma}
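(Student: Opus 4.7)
The plan is to combine a Fourier-analytic upper bound on the total influence of $f$ with a combinatorial lower bound on the influence of each individual variable. Set $d = \deg(f)$ and pass to the $\{\pm 1\}$-encoded version $h = 1 - 2f$, which has the same degree and the same individual influences as $f$. The total influence $I(f) = \sum_{i=1}^n \operatorname{Inf}_i(f)$ then equals $\sum_S |S|\,\hat h(S)^2$, and since $\hat h(S) = 0$ whenever $|S| > d$ and $\sum_S \hat h(S)^2 = 1$ by Parseval, this immediately yields $I(f) \leq d$.

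For a matching lower bound on each individual influence, the key tool would be the standard lemma that any nonzero multilinear polynomial $p \colon \{0,1\}^m \to \mathbb{R}$ of degree at most $k$ is nonzero on at least $2^{m-k}$ inputs. I would prove this by induction on $m$: write $p(x) = x_j\, q(x_{-j}) + r(x_{-j})$; if $r \equiv 0$, apply the hypothesis to $q$, which has degree at most $k-1$; otherwise consider both restrictions $p|_{x_j=0} = r$ and $p|_{x_j=1} = q+r$ and observe that the only way the two supports fail to sum to at least $2^{m-k}$ is when $q + r \equiv 0$, in which case $p = (x_j - 1)q$ actually has degree strictly less than $\deg p$, letting the induction close on $q$ instead. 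Applied to $\partial_i f := f|_{x_i=1} - f|_{x_i=0}$, which is a nonzero multilinear polynomial of degree at most $d-1$ in $n-1$ variables whenever $f$ depends on $i$, this gives $\operatorname{Inf}_i(f) \geq 2^{(n-1)-(d-1)}/2^{n-1} = 2^{1-d}$.

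Summing the per-variable lower bound over all $n$ dependent variables and comparing with the Fourier upper bound yields $n \cdot 2^{1-d} \leq I(f) \leq d$, i.e., $n \leq d \cdot 2^{d-1}$. Taking logarithms, and noting that $d \leq \log n$ may be assumed (else the conclusion is trivial), this rearranges to $d \geq \log n - \log d \geq \log n - \log\log n - O(1)$, as required. The main technical step is the support-size lemma; its inductive proof is elementary but hinges on the degree-dropping cancellation that arises when $p$ becomes divisible by $x_j - 1$, while everything else is routine Fourier-analytic bookkeeping.
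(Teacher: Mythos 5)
The paper states this lemma as a direct citation of Nisan and Szegedy and offers no proof of its own, and your argument is a correct reconstruction of the standard Nisan--Szegedy proof: total influence at most $d=\deg(f)$ by Parseval on the $\pm1$-valued version, individual influence of each relevant variable at least $2^{1-d}$ via the support-size lemma for nonzero low-degree multilinear polynomials, hence $n\le d\,2^{d-1}$ and $d\ge \log n-\log\log n-O(1)$. The only blemish is the phrasing ``$p=(x_j-1)q$ actually has degree strictly less than $\deg p$'' in the inductive step, which should instead say that $q$ has degree at most $k-1$, so the induction applies to the restriction $p|_{x_j=0}=-q$ with the improved degree bound; this is a wording slip and does not affect correctness.
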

We also have an upper bound on degree in terms of spectral sensitivity:
\begin{lemma}[Theorem 1 in \cite{ABKRT21}] \label{lower-bound}
    For any Boolean function $f$, $\deg(f)\leq \lambda(f)^2$.
\end{lemma}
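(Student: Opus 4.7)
\textbf{Proof proposal for Lemma~\ref{lower-bound}.} The plan is to deploy Huang's signed-matrix technique, suitably restricted to the sensitivity graph of $f$. Concretely, I would split the argument into a reduction step and a spectral step.

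First, I reduce to the case $n=\deg(f)$. Let $d=\deg(f)$ and pick a set $S\subseteq[n]$ of size $d$ with $\hat f(S)\neq 0$. The $S$-th discrete derivative $\partial_S f$ is a nonzero polynomial in the variables outside $S$, so there exists an assignment $y\in\{0,1\}^{[n]\setminus S}$ such that the restriction $g(x)=f(x,y)$ on the subcube $\{0,1\}^S$ has degree exactly $d$. The sensitivity graph $G_g$ is the subgraph of $G_f$ induced by this subcube, so $A_g$ is a principal submatrix of $A_f$, and Cauchy interlacing gives $\lambda(g)\leq\lambda(f)$. Hence it suffices to prove $\lambda(g)\geq\sqrt{d}$ for any $g:\{0,1\}^d\to\{0,1\}$ of full degree $d$.

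Next, I invoke Huang's construction: there is a $\{-1,0,+1\}$ signing $B$ of the $d$-dimensional hypercube adjacency matrix satisfying $B^2=dI$. Let $P_0,P_1$ be the diagonal projections onto $g^{-1}(0)$ and $g^{-1}(1)$, and decompose $B=B_+ + B_-$ with $B_+=P_0BP_0+P_1BP_1$ (non-sensitive edges) and $B_-=P_0BP_1+P_1BP_0$ (sensitive edges). A direct computation using $P_0P_1=0$ and $B^2=dI$ shows that the anticommutator $B_+B_-+B_-B_+$ equals $P_0B^2P_1+P_1B^2P_0=0$, so $B_+^2+B_-^2=dI$. Since $B_-$ is a signing of the sensitivity-graph adjacency matrix $A_g$, the standard signing inequality gives $\rho(B_-)\leq\rho(A_g)=\lambda(g)$, so it is enough to produce a unit vector $v$ with $v^TB_-^2v\geq d$, equivalently $v^TB_+^2v=0$, i.e., a vector in the kernel of $B_+$.

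The main obstacle is exactly this last step: exhibiting a kernel vector of $B_+$ using the hypothesis $\deg(g)=d$. My approach is to use the fact that $\hat g([d])\neq 0$ to produce a vector whose image under $B_+$ vanishes. Concretely, I would work with a Huang-style signing tailored so that the parity character $\chi_{[d]}(x)=(-1)^{|x|}$ is an eigenvector of the full $B$, then use the degree-$d$ condition to show that the restrictions of this character to $g^{-1}(0)$ and $g^{-1}(1)$ can be combined into a nontrivial vector annihilated by $B_+$. Alternatively, following the singular-value viewpoint, the identity $B_+^2+B_-^2=dI$ translates into $MM^T=dI-H_{00}^2$ for the bipartite block $M=P_0BP_1$; here the target becomes to show that the principal submatrix $H_{00}$ is singular, which I expect to extract from the degree condition via Cauchy interlacing between the eigenvalues of $H_{00}$ and the $\pm\sqrt{d}$ spectrum of $B$. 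This interlacing/singularity step is the key technical content of the lemma; everything else is bookkeeping.
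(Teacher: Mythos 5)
The paper does not prove this lemma at all --- it is imported as Theorem 1 of \cite{ABKRT21} --- so your proposal has to be measured against the proof given there. Your skeleton is essentially the right one and close in spirit to that proof: the reduction to a full-degree restriction via a principal submatrix and interlacing is correct, Huang's signing $B$ with $B^2=dI$ is the correct tool, the decomposition $B=B_++B_-$ into non-sensitive and sensitive parts is sound, the anticommutator computation giving $B_+^2+B_-^2=dI$ checks out, and the signing inequality $\rho(B_-)\leq\rho(A_g)$ is standard. But you stop exactly at the step that carries all the content: you never produce a kernel vector of $B_+$ from the hypothesis $\deg(g)=d$, and instead offer two speculative directions. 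As written this is a genuine gap --- the identity $B_+^2+B_-^2=dI$ holds for \emph{every} Boolean function, including constant ones where $B_-=0$, so without using the degree hypothesis at this point the argument proves nothing.

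The gap does close, but by a parity-counting argument rather than by either mechanism you sketch. Every hypercube edge joins a vertex of even Hamming weight to one of odd Hamming weight, so every non-sensitive edge joins two points with the same $g$-value and opposite weight parity. Hence $H_{00}=P_0BP_0$ is a signing of a bipartite graph on the parts $g^{-1}(0)\cap\{x:|x|\text{ even}\}$ and $g^{-1}(0)\cap\{x:|x|\text{ odd}\}$, so its rank is at most twice the smaller part; the difference of the two part sizes is $\sum_{x\in g^{-1}(0)}(-1)^{|x|}=-\sum_x g(x)(-1)^{|x|}$, which is nonzero precisely because $\hat g([d])\neq 0$. Thus $H_{00}$ has a nonzero kernel vector $u$ supported on $g^{-1}(0)$, and $B_+u=H_{00}u=0$, completing your argument. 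So your second route (singularity of $H_{00}$) is the right target, but it follows from this rank count, not from interlacing $H_{00}$ against the $\pm\sqrt{d}$ spectrum of $B$; your first route, tailoring the signing so that $\chi_{[d]}$ is an eigenvector of $B$, does not obviously lead anywhere. For comparison, \cite{ABKRT21} organize the same parity idea dually: they partition the cube by the value of $g(x)\oplus(|x|\bmod 2)$, observe that the \emph{sensitive} edges are exactly those staying inside a part, note that full degree forces the parts to have unequal sizes, and apply Cauchy interlacing to the principal submatrix of $B$ on the larger part to get an eigenvalue of magnitude at least $\sqrt{d}$ directly --- which bypasses the anticommutator identity altogether.
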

Thus, any total non-degenerate function will have $\lambda(f) \geq \sqrt{(1+o(1))\log n}$.
On the other hand, there is the following upper bound on spectral sensitivity:
\begin{lemma}[Lemma 31 in \cite{ABKRT21}] \label{upper-bound}
For any Boolean function $f$, $\lambda(f)\leq \sqrt{\sens_0(f)\sens_1(f)}$.
\end{lemma}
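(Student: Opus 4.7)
The plan is to identify $\lambda(f)$ with the spectral radius of the sensitivity graph $G_f$ on vertex set $\{0,1\}^n$, whose edges connect each input $x$ to each Hamming-neighbor $x^{(i)}$ with $f(x) \neq f(x^{(i)})$. Because every edge of $G_f$ joins inputs of opposite $f$-value, $G_f$ is bipartite with parts $V_b = f^{-1}(b)$ for $b \in \{0,1\}$, and the $G_f$-degree of any vertex $x \in V_b$ is exactly the local sensitivity of $f$ at $x$, hence at most $\sens_b(f)$. The lemma therefore reduces to the classical fact that a bipartite graph with maximum degrees $d_0$ and $d_1$ on its two sides has spectral radius at most $\sqrt{d_0 d_1}$.

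To establish this bipartite bound, I would take a top eigenvector $v$ of the adjacency matrix $A$ with eigenvalue $\lambda \geq 0$, decompose $v = v_0 + v_1$ along the bipartition, and use that $A$ swaps the two sides, so $A v_0 = \lambda v_1$ and $A v_1 = \lambda v_0$ (and $v_0 \neq 0$ unless $\lambda = 0$, in which case we are done). Then
\[
    \lambda^2 \|v_0\|^2 \;=\; \|A v_0\|^2 \;=\; \sum_{y \in V_1} \Bigl( \sum_{x \in V_0,\, x \sim y} v_0(x) \Bigr)^{\!2} \;\leq\; \sens_1(f) \sum_{y \in V_1} \sum_{x \sim y} v_0(x)^2,
\]
where the inequality is Cauchy–Schwarz applied to each inner sum combined with $\deg(y) \leq \sens_1(f)$ for $y \in V_1$. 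Swapping the order of summation on the right and applying $\deg(x) \leq \sens_0(f)$ for $x \in V_0$ bounds the double sum by $\sens_0(f)\sens_1(f) \|v_0\|^2$, which gives $\lambda^2 \leq \sens_0(f)\sens_1(f)$.

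There is no real obstacle here once the bipartite structure is observed: the whole argument is a one-line Cauchy–Schwarz estimate on the biadjacency matrix, and the only minor point is to handle the degenerate case $v_0 = 0$ (which forces $\lambda = 0$ since then $A v_0 = 0 = \lambda v_1$ and $v \neq 0$).
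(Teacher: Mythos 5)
The paper does not prove this lemma at all --- it is imported verbatim as Lemma~31 of \cite{ABKRT21} --- so there is no in-paper argument to compare against; what you have written is a correct, self-contained reconstruction of the standard proof, and it matches the argument in the cited reference (which bounds the norm of the biadjacency block by the geometric mean of the maximum row and column sums, i.e.\ exactly your Cauchy--Schwarz computation). Your reduction is the right one: $G_f$ is bipartite with parts $f^{-1}(0)$ and $f^{-1}(1)$, the degree of a vertex $x$ is $\sens(f,x)\leq \sens_{f(x)}(f)$, and the claim becomes the spectral bound $\sqrt{d_0 d_1}$ for bipartite graphs. One step deserves an explicit word: from $Av_0=\lambda v_1$ you get $\norm{Av_0}^2=\lambda^2\norm{v_1}^2$, not $\lambda^2\norm{v_0}^2$ as written. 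The equality you assert is still true, either because symmetry of $A$ gives $\lambda\norm{v_0}^2=\langle v_0,Av_1\rangle=\langle Av_0,v_1\rangle=\lambda\norm{v_1}^2$, so $\norm{v_0}=\norm{v_1}$ when $\lambda>0$, or more directly because $\norm{Av_0}^2=\langle v_0,A^2v_0\rangle=\lambda^2\norm{v_0}^2$ since $A^2v_0=\lambda Av_1=\lambda^2 v_0$; but as stated it looks like a typo rather than a derivation, so add that line. Your handling of the degenerate case $v_0=0$ is fine.
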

Therefore, as sensitivity is lower bounded by $\Omega(\log n)$ \cite{Sim83}, one approach to construct a function with minimal spectral sensitivity is a function with $\sens_0(f)=1$ and $\sens_1(f)=O(\log n)$. 
For such a function, $\lambda(f) = \sqrt{\sens_1(f)}$ by the fact that $\sqrt{\sens_1(f)} = \sqrt{\sens(f)} \leq \lambda(f) \leq \sqrt{\sens_0(f)\sens_1(f)} = \sqrt{\sens_1(f)}$ by the above lemmas.

We give two methods to obtain such functions. The first is based on combining Hamming codes with the Boolean address function, and achieves $\lambda(f)=\sqrt{(1+o(1))\log n}$, an optimal constant factor. The second, using the desensitization of \cite{BHT17} on known low-sensitivity functions, is more general but achieves a worse constant: $\lambda(f) \geq \sqrt{(  3 + o(1)) \log n}$.

We also show that having one of $\sens_0(f)$, $\sens_1(f)$ be constant is not necessary to achieve this bound. We can modify the first construction to obtain a function with $\sens_0(f) = (c+o(1)) \log n$, $\sens_1(f)= (1-c+o(1)) \log n$ for any $0 \leq c \leq 1$ while maintaining $\lambda(f)=\sqrt{(1+o(1))\log n}$.
This is an optimal tradeoff, as we generalize the proof of the lower bound on sensitivity to imply a stronger statement:
\begin{reptheorem}{simon}[Implicit in \cite{Sim83}] 
For any non-degenerate Boolean function $f$,
\[\sens_0(f) + \sens_1(f) \geq \log n - \log \log n + 2.\]
\end{reptheorem}
In particular, we obtain a new example where the sensitivity of a Boolean function is minimal up to the constant, that is, $\sens_0(f) = \sens_1(f) = (\frac{1}{2} + o(1)) \log n$.
Up until now the only such example, to our knowledge, was the monotone address function \cite{Weg85}.

\section{Definitions}

Let $f: \{0,1\}^n \rightarrow \{0,1\}$ be a total non-degenerate Boolean function: for every $i\in[n]$, there is an input $x\in\{0,1\}^n$ such that $f(x) \neq f(x^{(i)})$, where $x^{(i)}$ is $x$ with the $i$-th variable flipped.

The \emph{sensitivity} of $f$ on an input $x$, denoted by $\sens(f,x)$, is defined as the number of $i \in [n]$ such that $f(x) \neq f(x^{(i)})$.
For $b \in \{0, 1\}$, the \emph{$b$-sensitivity} of $f$ is defined as $\sens_b(f) = \max_{x \in f^{-1}(b)} \sens(f,x)$.
The \emph{sensitivity} of $f$ is defined as $\sens(f) = \max\{\sens_0(f), \sens_1(f)\}$.

The \emph{sensitivity graph} of $f$ is defined as a subgraph $G_f=(V,E)$ of the $n$-dimensional Boolean hypercube, where $V = \{0,1\}^n$ and $E = \{\{x,x^{(i)}\} \mid f(x)\neq f(x^{(i)})\}$.
Let $A_f$ be the adjacency matrix of $G_f$; then the \emph{spectral sensitivity} of $f$, denoted by $\lambda(f)$, is defined as $\norm{A_f}$.
Since $G_f$ is bipartite and $A_f$ is real and symmetric, $\norm{A_f}$ is also the largest eigenvalue of $A_f$.

We say that an input $x$ \emph{satisfies} a partial assignment $p : [n] \to \{0,1,*\}$ and write $x \in p$ if for all $i\in[n]$ either $x_i = p(i)$ or $p(i) = *$.
The set of all inputs satisfying $p$ forms a subcube of the $n$-dimensional Boolean hypercube.
The dimension of this subcube is the number of $*$ in $p$, called the \emph{dimension} of $p$ and denoted by $\dim(p)$; the total number of $0$s and $1$s is called the \emph{co-dimension} of $p$ and denoted by $\codim(p) = n - \dim(p)$.

A \emph{certificate} is a partial assignment $p$ such that $f(x)$ is constant for all inputs $x \in p$.
The \emph{certificate complexity} of $x$ with respect to $f$ is defined as $\cert(f,x) = \min_{p : x \in p} \codim(p)$.
The \emph{$b$-certificate complexity} of $f$ is defined as $\cert_b(f) = \max_{x \in f^{-1}(b)} \cert(f,x)$.
The \emph{certificate complexity} of $f$ is defined as $\cert(f) = \max\{\cert_0(f), \cert_1(f)\}$.

A set of $b$-certificates $S$ such that each input $x \in f^{-1}(b)$ satisfies exactly one $p \in S$ is called an \emph{unambiguous collection} of $b$-certificates.
The \emph{$b$-unambiguous certificate complexity} of $f$ is defined as $\UC_b(f) = \min_S \max_{p \in S} \codim(p)$.
The \emph{unambiguous certificate complexity} of $f$ is defined as $\UC(f) = \max\{\UC_0(f), \UC_1(f)\}$.

\section{Hamming Address Function}

For the first approach, we directly construct a family of functions with minimal spectral sensitivity. The construction is a variation of the well known address function, where the address will be encoded using Hamming codes \cite{Ham50}. 

For any integer $r\geq 2$, the Hamming code $H_r$ encodes messages of length $2^{r}-r-1$ into codewords of length $2^{r}-1$ so that any two codewords differ in at least 3 positions. The total number of different messages for such a code is $|H_r|=2^{2^r-r-1}$.

\begin{theorem} \label{haddr}
    There is a family of total non-degenerate Boolean functions $\haf_r$ for $r\geq 2$ such that $\sens_0(\haf_r)=1$, $\sens_1(\haf_r) \leq 2^r$, $n \geq 2^{2^r-r-1}$. For these functions
    $\lambda(\haf_r)= \sqrt{(1 + o(1)) \log n}$.
\end{theorem}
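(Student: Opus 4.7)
The plan is to define $\haf_r$ as an address function whose addressing block is encoded by the Hamming code $H_r$. Partition the $n$ variables into an \emph{address part} of $2^r - 1$ bits and a \emph{value part} of $|H_r| = 2^{2^r - r - 1}$ bits, indexed by the codewords of $H_r$. Declare $\haf_r(x) = 1$ iff the address part is a valid codeword $c \in H_r$ and the value bit indexed by $c$ equals $1$; otherwise $\haf_r(x) = 0$. By construction $n \geq 2^{2^r - r - 1}$, and non-degeneracy is routine (each value bit is witnessed by the input whose address equals its index and all other value bits are $0$; each address bit is witnessed by any codeword whose value bit is set to $1$).

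Next I would verify the sensitivity bounds. On a $1$-input $x$, the address is some codeword $c$ and the value bit at $c$ is $1$; flipping any of the $2^r - 1$ address bits moves to a non-codeword (since $H_r$ has minimum distance $3$), and flipping the value bit at $c$ also changes the output, while no other value bit does, giving $\sens_1 \leq 2^r$. On a $0$-input $x$ there are two cases: if the address is a codeword $c$ with value bit $0$, only that value bit is sensitive; if the address is a non-codeword, the \emph{perfectness} of $H_r$ says that there is a unique codeword $c^\ast$ at Hamming distance $1$, so exactly one address bit flip can even potentially change the output, and no value bit flip does. Hence $\sens_0 = 1$.

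Given these sensitivities, Lemma \ref{upper-bound} yields $\lambda(\haf_r) \leq \sqrt{\sens_0 \cdot \sens_1} \leq \sqrt{2^r}$; since $\log n \geq 2^r - r - 1$ and $r = O(\log \log n)$, this is $\sqrt{(1+o(1))\log n}$. For the matching lower bound, non-degeneracy combined with Lemmas \ref{deg-bound} and \ref{lower-bound} gives $\lambda(\haf_r)^2 \geq \deg(\haf_r) \geq \log n - O(\log \log n)$, so the two sides agree up to the claimed constant.

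The only nontrivial step is the $\sens_0 = 1$ calculation, and it rests entirely on the \emph{perfectness} of $H_r$: every non-codeword lies at distance exactly one from a unique codeword, so only one address flip out of $2^r - 1$ can ever escape the $0$-region. This is what makes the Hamming code (rather than any other distance-$3$ code) the right object — the absence of non-codewords at distance $\geq 2$ from every codeword is precisely what prevents $\sens_0$ from growing, and it is the one spot in the argument where a different ingredient would break the construction.
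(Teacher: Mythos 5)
Your construction is the same function as the paper's (you index the data bits by codewords, the paper by messages), and your sensitivity analysis and the degree-based lower bound on $\lambda$ are correct, so the proof goes through. One conceptual correction to your closing paragraph: perfectness is \emph{not} what makes $\sens_0 = 1$ work. Two codewords both at distance $1$ from a given word would be at distance $\leq 2$ from each other, so minimum distance $3$ alone guarantees that at most one address flip can reach a codeword --- this is exactly the paper's argument, and any distance-$3$ code would do. Where perfectness (more precisely, meeting the Hamming bound) actually earns its keep is in the rate: it maximizes the number of codewords, hence of data bits, for a given codeword length $2^r-1$, which is what makes $2^r = (1+o(1))\log n$ and yields the optimal constant $1$ inside $\sqrt{(1+o(1))\log n}$ rather than a larger one.
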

\begin{proof}
    Let $k=2^{r}-1$. For any message $m \in [2^{2^r-r-1}]$ of $H_r$, denote by $w_m$ the corresponding codeword of length $k$.
    We construct the following $1$-certificate $p_m$ for $\haf_r$ for each such message:
    
    \[p_m=w_m *^{m-1} 1 *^{2^{2^r-r-1} - m}.\]

    We define $\haf_r(x)=1$ iff $x$ satisfies $p_m$ for some $m$.
    Each such $1$-certificate differs from any other in at least $3$ of the first $k$ positions, as they correspond to different codewords. Thus, any $0$-input can be adjacent to at most one such certificate and we have $\sens_0(\haf_r)=1$. 
    
    This function is non-degenerate, as each of the first $k$ positions is sensitive for every $1$-input, and each of the remaining $2^{2^r-r-1}$ positions is sensitive for $1$-inputs corresponding to one of these certificates.
    
    We have $\sens_1(\haf_r) \leq \C_1(\haf_r)=k+1=2^r$.
    Lastly, $n=k+2^{2^r-r-1} \geq 2^{2^r-r-1}$.
    Then for spectral sensitivity we have
    \[\lambda(\haf_r)=\sqrt{s_1(\haf_r)} = \sqrt{(1 + o(1)) \log n}.\qedhere\]
\end{proof}

\section{Desensitized Functions}
The second approach is to examine functions achieving $\sens_0(f)=\sens_1(f)=O(\log(n))$ and apply the desensitizing transformation of \cite{BHT17} to reduce their $0$-sensitivity to $1$. Although in general this transformation can asymptotically increase $1$-sensitivity, we show that this is not the case for some well-known functions with $\sens(f)=O(\log(n))$.
Some examples of such low-sensitivity functions include the Boolean address function, the balanced binary tree \cite{DS19} (Definition 3.1.), the function constructed in \cite{CHS20}, and the monotone Boolean address function \cite{Weg85}.

\begin{lemma}[Desensitization, Lemma 12 in \cite{BHT17}] \label{desens}
    For any $f : \{0,1\}^n \to \{0, 1\}$, there exists an $f' : \{0,1\}^{3n} \to \{0,1\}$ such that $\sens_0(f') = 1$ and $\UC_1(f') \leq 3\UC_1(f)$.
\end{lemma}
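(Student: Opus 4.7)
My plan is a repetition-code encoding: blow up each input variable of $f$ into a triple of bits, and lift an unambiguous family of $1$-certificates of $f$ to $f'$ by tripling each fixed literal. I would index the $3n$ inputs of $f'$ as $y_{i,j}$ with $i \in [n]$ and $j \in \{1,2,3\}$. Fix an unambiguous collection $\mathcal{S}$ of $1$-certificates of $f$ that achieves $\UC_1(f)$. For each $p \in \mathcal{S}$, define a partial assignment $p'$ on the $3n$ new variables by $p'(i,1) = p'(i,2) = p'(i,3) = p(i)$ when $p(i) \in \{0,1\}$, and $p'(i,j) = *$ for all three $j$ when $p(i) = *$. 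Then set $f'(y) = 1$ exactly when $y$ satisfies some $p'$ with $p \in \mathcal{S}$, and $f'(y) = 0$ otherwise.

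The verification then splits into three pieces. First, $\codim(p') = 3 \codim(p) \leq 3 \UC_1(f)$, which is immediate, and the lifted family covers each $1$-input of $f'$. Second, the collection $\{p' : p \in \mathcal{S}\}$ remains unambiguous: for distinct $p, q \in \mathcal{S}$, unambiguity of $\mathcal{S}$ forces some coordinate $i^*$ with $\{p(i^*), q(i^*)\} = \{0,1\}$, and then $p'$ and $q'$ fix the triple at $i^*$ to $000$ and $111$ respectively, which cannot both be satisfied.

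Third, and this is the crux, $\sens_0(f') \leq 1$. Suppose a $0$-input $y$ had two sensitive coordinates. Each flip sends $y$ into an input satisfying some certificate $p_a'$ or $p_b'$. If $p_a = p_b$, both flipped bits would have to equal the unique position at which $y$ disagrees with $p_a'$ on a fixed coordinate, impossible. If $p_a \neq p_b$, pick a discriminating $i^*$ with $p_a(i^*) \neq p_b(i^*)$; then the triple $y_{i^*,\cdot}$ would have to lie within Hamming distance one of both $000$ and $111$, impossible since these triples are at Hamming distance $3$.

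The main obstacle is this third step, and the choice of a length-$3$ repetition code is precisely what makes it go through: opposite labels produce codewords at distance $3$, so no single bit flip from a $0$-input can reach two distinct lifted certificates. The remaining housekeeping is light: $f'$ is total by construction, and non-degeneracy of $f'$ (so that $\sens_0(f') = 1$ rather than $0$) follows from non-degeneracy of $f$, since extending any $p \in \mathcal{S}$ to a $1$-input whose free triples are constant makes each bit of a fixed triple sensitive.
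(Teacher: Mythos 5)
Your construction is exactly the one the paper (following Lemma 12 of \cite{BHT17}) describes: concatenate three copies of the input, lift each certificate $p$ from an unambiguous collection to $ppp$, and use the fact that distinct certificates in an unambiguous collection disagree on a fixed coordinate, so their lifts are at Hamming distance at least $3$ and no $0$-input can be one flip away from two of them. The proof is correct and takes essentially the same approach, including the same distance-$3$ argument the paper invokes when it later lower-bounds $\sens_1(f')$.
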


To bound the $1$-sensitivity of desensitized functions, we can observe that $\sens_1(f') \leq 3\D(f)$, since $\sens_1(f') \leq \UC_1(f') \leq 3\UC_1(f) \leq 3\D(f)$, which follows from the basic properties of $\UC$ \cite{BHT17}.
For each of the examples, we have $\D(f) = O(\log n)$, therefore we also have $\sens_1(f') = O(\log n)$.
Thus, $\lambda(f') = O(\sqrt{s_1(f')}) = O(\sqrt{\log n})$.

However, we will show that this approach gives us larger constant factors at $\sqrt{\log n}$ than the Hamming address function.
To do that, we can lower bound $\lambda(f')$ in terms of $\sens_1(f)$ or $\UC_1(f)$.
The desensitized function $f'$ in Lemma \ref{desens} is defined as follows.
Let $S$ be a collection of unambiguous $1$-certificates with co-dimension $\UC_1(f)$.
Define $x'=x_1x_2x_3$, then $f'(x') = 1$ iff there exists a $1$-certificate $p \in S$ such that $x_1, x_2, x_3 \in p$.
We can show that if $ppp(i) \neq *$, then $x'$ is sensitive on $i$.
If that was false, then $x'^{(i)}$ satisfies some $1$-certificate $qqq$, for $q \in S$ and $p \neq q$.
Since $p$ and $q$ differ in at least $1$ position, $ppp$ and $qqq$ differ in at least $3$ positions, but only a single position was changed, a contradiction.
Hence $\sens_1(f') \geq 3\UC_1(f)$ and $\sens_1(f') = \UC_1(f) = 3\UC_1(f)$.
By the fact that $\sens_0(f')=1$ and by Lemma \ref{lower-bound}, we have $\lambda(f') = \sqrt{\sens_1(f')} = \sqrt{3\UC_1(f)} \geq \sqrt{3\sens_1(f)}$.

All the above examples except the monotone address function have $\sens_1(f) = (1+o(1))\log n$.
In that case, $\lambda(f') \geq \sqrt{3\sens_1(f)} = \sqrt{(3+o(1))\log n}$.
The monotone address function $\maf_k$ \cite{Weg85}, however, has sensitivity $(\frac{1}{2}+o(1)) \log n$, so potentially $\lambda(\maf_k')$ could have a smaller constant factor than the other examples.
However, we show that is not the case.

This function is defined as follows.
There are $k$ address bits $x_1$, $\ldots$, $x_k$ and $m = \binom{k}{\lfloor k/2 \rfloor}$ data bits $y_1$, $\ldots$, $y_m$.
Take any bijection $g$ from $\{x \in \{0,1\}^n \mid |x| = \lfloor k/2 \rfloor\}$ to $[m]$.
Then
\[
\maf_k(x,y) =
\begin{cases}
    0, & \text{if $|x| < \lfloor k/2 \rfloor$,}\\
    y_{g(x)} & \text{if $|x| = \lfloor k/2 \rfloor$,}\\ 
    1, & \text{if $|x| > \lfloor k/2 \rfloor$.}
\end{cases}
\]
For this function, $\sens(\maf_k) = \lceil k/2 \rceil+1 = (1/2+o(1))\log n$.

\begin{proposition}
    $\lambda(\maf_k') = \sqrt{(3+o(1)) \log n}$.
\end{proposition}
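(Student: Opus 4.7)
The plan is to establish $\UC_1(\maf_k) = (1+o(1))\log n$, which combined with the identity $\lambda(f') = \sqrt{3\UC_1(f)}$ already derived just above (using $\sens_0(\maf_k') = 1$) immediately yields the desired $\lambda(\maf_k') = \sqrt{(3+o(1))\log n}$.

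For the upper bound, a decision tree for $\maf_k$ may query all $k$ address bits, thereby determining $|x|$, and additionally query $y_{g(x)}$ in the single case $|x| = \lfloor k/2\rfloor$. This gives $\D(\maf_k) \leq k+1$, and since the $1$-leaves of any deterministic decision tree form an unambiguous partition of $f^{-1}(1)$, we obtain $\UC_1(\maf_k) \leq \D(\maf_k) \leq k+1$. Since $n = k + \binom{k}{\lfloor k/2\rfloor}$, this is $(1+o(1))\log n$.

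For the lower bound, I would invoke the general inequality $\UC_1(f) \geq \deg(f)$: an unambiguous partition $\{p_1,\dots,p_t\}$ of $f^{-1}(1)$ into subcubes yields the polynomial identity $f(x) = \sum_i \mathbf{1}_{p_i}(x)$, with each indicator $\mathbf{1}_{p_i}$ expressible as a product of $\codim(p_i) \leq \UC_1(f)$ literals, so $f$ has degree at most $\UC_1(f)$. Since $\maf_k$ is non-degenerate (every address bit is sensitive on weight-$\lfloor k/2\rfloor$ inputs, and every data bit $y_j$ is sensitive at $x = g^{-1}(j)$), Lemma \ref{deg-bound} gives $\deg(\maf_k) \geq \log n - O(\log\log n)$, and hence $\UC_1(\maf_k) \geq (1-o(1))\log n$.

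The main obstacle is the lower bound: routing it through $\UC_1 \geq \deg$ sidesteps a direct combinatorial analysis of unambiguous partitions of $\maf_k^{-1}(1)$, which would be delicate because $\cert_1(\maf_k) = \lfloor k/2\rfloor + 1 = (\frac{1}{2}+o(1))\log n$ is only half of the target, so sensitivity- or certificate-based bounds alone cannot suffice.
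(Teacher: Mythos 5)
Your proof is correct and follows the same backbone as the paper's: both routes go through the chain $\lambda(\maf_k') = \sqrt{3\UC_1(\maf_k)}$ established just before the proposition, together with $\UC_1(f) \geq \deg(f)$ for the lower bound and $\UC_1(f) \leq \D(f) \leq k+1$ for the upper bound. The one genuine difference is how you lower-bound $\deg(\maf_k)$. The paper restricts all data bits to $0$, observes that the restriction is the threshold function $\thr_{\lceil k/2\rceil}$ on the $k$ address bits, which has degree exactly $k$, and uses the fact that degree does not increase under restriction to get $\deg(\maf_k) \geq k$. You instead verify that $\maf_k$ is non-degenerate and invoke the Nisan--Szegedy bound (Lemma \ref{deg-bound}) to get $\deg(\maf_k) \geq \log n - O(\log\log n)$. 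Since $\log n = k - \frac{1}{2}\log k + O(1)$, both bounds give $\UC_1(\maf_k) \geq (1-o(1))\log n$, which is all the statement needs; your version is more generic (it would apply to any non-degenerate function with $\D(f) = (1+o(1))\log n$ and reuses a lemma already stated in the paper), while the paper's restriction argument is sharper, giving the exact value $\deg(\maf_k) = k$ rather than $k - O(\log k)$. Your explicit treatment of the upper bound $\UC_1 \leq \D \leq k+1$ and the non-degeneracy check are details the paper leaves implicit, and your closing remark correctly identifies why a purely sensitivity- or certificate-based lower bound would fall short by a factor of $2$.
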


\begin{proof}
    Examine the restriction of $\maf_k$ where all data bits $y_i$ are set to $0$.
    This restriction is just a threshold function on $k$ bits, and its degree is exactly $k$ (Proposition 5 in \cite{BdW02}).
    Since the degree is non-increasing under restrictions, we have $\deg(\maf_k) \geq k$.
    Since $\UC_1(f) \geq \UC_{\min}(f) \geq \deg(f)$ (Section 2.7.2.~in \cite{BHT17}), the required follows.
\end{proof}

\section{Sensitivity Tradeoff}

In this Section, we apply our $\haf_r$ function to obtain the optimal characterization of the relation between $0$- and $1$-sensitivities, in case when $\sens(f)$ is minimal.

We can notice that \cite{Sim83} actually gives a stronger result than stated, bounding the sum of $0$- and $1$- sensitivities. We restate the proof here for convenience.
\begin{theorem}[Implicit in \cite{Sim83}]  \label{simon}
For any non-degenerate Boolean function $f$,
\[\sens_0(f) + \sens_1(f) \geq \log n - \log \log n + 2.\]
\end{theorem}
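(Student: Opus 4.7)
The plan is to adapt the BFS argument used in Simon's original sensitivity lower bound, taking care to track $s_0 = s_0(f)$ and $s_1 = s_1(f)$ as separate parameters. The essential observation is that the sensitivity graph $G_f$ is bipartite between $f^{-1}(0)$ and $f^{-1}(1)$, so a BFS rooted at a $0$-input has odd-indexed levels in $f^{-1}(1)$ (with max degree $s_1$) and even-indexed levels in $f^{-1}(0)$ (with max degree $s_0$). This yields a geometric recurrence whose rate is $(s_0-1)(s_1-1)$ rather than $(s-1)^2$, which is exactly what replaces the $2s$ exponent in Simon's bound by $s_0+s_1$.

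Concretely, I would first fix an input $v_0$ of maximum sensitivity (WLOG $f(v_0) = 0$), and perform BFS in $G_f$ rooted at $v_0$. A short induction on the level index gives
\[|L_{2k+1}| \leq s_0\bigl((s_0-1)(s_1-1)\bigr)^{k},\qquad |L_{2k+2}| \leq s_0(s_1-1)\bigl((s_0-1)(s_1-1)\bigr)^{k},\]
for $k\geq 0$, where the $-1$ factors account for the mandatory parent edge in the BFS tree. Each variable in the connected component of $v_0$ in $G_f$ labels some edge of the BFS tree, so the number of variables appearing in this component is at most $\sum_{k\geq 1}|L_k|$, a geometric series in $(s_0-1)(s_1-1)$.

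Next, I would use that each connected component $C$ of $G_f$ lives in a subcube of dimension equal to its variable count, giving $|C|\leq 2^{|V(C)|}$. Combining this with the per-component BFS bound and aggregating via non-degeneracy ($n\leq \sum_C |V(C)|$), one should arrive after routine computation at an inequality of the form $n \leq (\log n)\cdot 2^{s_0+s_1-2}$, which rearranges to the claimed $s_0+s_1 \geq \log n - \log\log n + 2$.

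The hardest step will be the aggregation across components of $G_f$: BFS from a single vertex reaches only one component, so some global mechanism is needed to account for variables that appear in other components. Simon's original argument handles this via a Kraft-like inequality coming from the fact that the $f$-preserving clusters partition $f^{-1}(0)$ (resp.\ $f^{-1}(1)$) into subcubes, whose codimensions are controlled by boundary variable counts. Pinning down the explicit additive constant $+2$ (rather than a generic $+O(1)$) will also require some careful bookkeeping of small components and base cases.
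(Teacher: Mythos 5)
There is a genuine gap, and it sits exactly where you flag it, but it is more serious than ``careful bookkeeping.'' Your BFS recurrence gives level sizes that \emph{grow} geometrically: the ratio $(s_0-1)(s_1-1)$ is at least $1$ whenever $s_0,s_1\geq 2$, so $\sum_{k\geq 1}|L_k|$ is a divergent series unless you can bound the depth of the BFS tree --- and you cannot, since the depth is the eccentricity of $v_0$ in the graph metric of its component, which is not controlled by the number of variables appearing in that component (a component can ``snake'' through a low-dimensional subcube for arbitrarily many steps). So the promised upper bound on the variable count of a component never materializes. The aggregation step is also directionally wrong: $n\leq\sum_C|V(C)|$ is true but useless, because the number of components is uncontrolled (for $f=x_1$ the sensitivity graph is a perfect matching with $2^{n-1}$ components), so upper-bounding each $|V(C)|$ does not upper-bound the sum; and the subcube containment $|C|\leq 2^{|V(C)|}$ is an upper bound on $|C|$, whereas your strategy would need a strong \emph{lower} bound on $|C|$ to argue that a variable-rich component occupies a large part of the cube.

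The paper's proof (Simon's actual argument with $s_0$ and $s_1$ tracked separately) is structured quite differently and avoids all of this. For each direction $i$ it fixes one sensitive edge $\{x,x^{(i)}\}$ and defines $U_0$ as the set of $0$-inputs $y$ in the halfcube $y_i=0$ reachable from $x$ by a path of $0$-inputs such that simultaneously $y^{(i)}$ is reachable from $x^{(i)}$ by a path of $1$-inputs. A one-step closure argument shows that the subgraph of the hypercube induced by $U_0$ has minimum degree at least $n-s_0(f)-s_1(f)+1$, and Simon's isoperimetric lemma (a nonempty subgraph of the hypercube with minimum degree $\delta$ has at least $2^{\delta}$ vertices) then yields $|U_0|\geq 2^{n-s_0(f)-s_1(f)+1}$ sensitive edges in direction $i$. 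Summing over all $n$ directions and comparing with the trivial bound $s(f)2^{n-1}$ on the total number of edges of $G_f$ gives $s_0(f)+s_1(f)\geq\log n-\log s(f)+2$, and a case split on whether $s(f)>\log n$ finishes. The ingredient your plan is missing is precisely this exponential lower bound on a two-sidedly reachable set via the minimum-degree lemma; BFS level-counting cannot substitute for it.
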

\begin{proof}
    Let $i\in [n]$ be an arbitrary input position. Since $f$ is non-degenerate, there exists an input $x$ such that $f(x) \neq f(x^{(i)})$. W.l.o.g.~assume that $f(x)=0$ and $x_i=0$. For $b\in \{0,1\}$, let $C_b$ be the set of inputs $y$ with $y_i=b$. Now define subsets of inputs $U_b \subseteq C_b$; let $y\in U_0$ iff:
    \begin{itemize}
        \item there exists a path $p_0$ from $x$ to $y$ in $C_0$ such that every $y' \in p_0$ has $f(y')=0$;
        \item there exists a path $p_1$ from $x^{(i)}$ to $y^{(i)}$ in $C_1$ such that every $y'\in p_1$ has $f(y')=1$.
    \end{itemize}
    Let then $y\in U_1$ iff $y^{(i)} \in U_0$.
    Note that, for every $y\in U_0$, $f(y)=0,f(y^{(i)})=1$.
    Let $G_b$ be the subgraph of the hypercube induced by $U_b$ and denote by $\md(G_b)$ the minimum degree of this subgraph.

    Examine an input $y\in U_0$. It has sensitivity at most $s_0(f)$. Likewise, $y^{(i)}$ has sensitivity at most $\sens_1(f)$ such positions. As they are both sensitive on $i$, there are therefore at least $n-\sens_0(f)-\sens_1(f)+1$ positions $j$ such that $f(y^{(j)})=0$ and $f(y^{(i,j)})=1$. But then $y^{(j)} \in U_0$. Therefore $\md(G_0),\md(G_1) \geq n-\sens_0(f)-\sens_1(f)+1$.

    We next use the following lemma:
    \begin{lemma}[Lemma 1 in \cite{Sim83}]
    Let $G=(V,E)$ be a non-empty subgraph of the Boolean hypercube. Then $|V| \geq 2^{\md(G)}$. 
    \end{lemma}
    Therefore $|U_0|\geq 2^{ n-\sens_0(f)-\sens_1(f)+1}$. Note that each such input is sensitive to the $i$-th position, therefore there are at least that many edges in the $i$-th dimension in the sensitivity graph of $f$.
    We can repeat this argument for every $i\in[n]$, giving a total of at least $n2^{ n-\sens_0(f)-\sens_1(f)+1}$ such edges.

    On the other hand, the sensitivity graph of $f$ can have at most $\sens(f) 2^{n-1}$ edges. Therefore
    \begin{align*}
        \sens(f) 2^{n-1} &\geq n2^{ n-\sens_0(f)-\sens_1(f)+1} \\
        2^{\sens_0(f)+\sens_1(f)} & \geq \frac {4n} {\sens(f)}\\
        \sens_0(f)+\sens_1(f) &\geq \log n-\log \sens(f)+2.
    \end{align*}

    If $\sens(f) > \log n$, then $\sens_0(f)+\sens_1(f) > \sens(f) > \log n$ and we are done.
    Otherwise,
    \[\sens_0(f)+\sens_1(f) \geq \log n-\log \log n+2.\qedhere\]
    
\end{proof}

Therefore minimizing sensitivity up to constant factors requires a tradeoff between these two values.
We can see that both $\maf_k$ and $\haf_r$ achieve this tradeoff.
A natural question arises, whether there exists such a function in the general case.
Here we give a corresponding construction using $\haf_r$, answering the question in positive.
\begin{theorem}
    For any integer $k$ and any $0\leq c\leq 1$ there exists a total non-degenerate Boolean function $f$ on $n>k$ inputs such that $\sens_0(f)=(c+o(1))\log n$, $\sens_1(f)=(1-c+o(1))\log n$ and $\lambda(f)=\sqrt{(1+o(1))\log n}$.
\end{theorem}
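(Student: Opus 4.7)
The plan is to combine $\haf_r$ with itself as follows. Take the outer function to be $\haf_r$, and replace each of its $M_r = 2^{2^r-r-1}$ data bits $y_m$ by $\neg\haf_{r'}(z_m)$, evaluated on a fresh block $z_m \in \{0,1\}^{n_{r'}}$. The resulting total function is $f(w, z_1, \ldots, z_{M_r}) = 1$ iff $w$ is a codeword $w_m$ of the outer Hamming code and $\haf_{r'}(z_m) = 0$. Intuitively, the outer layer contributes $\sens_1(f) \approx 2^r$, and by replacing each $1$-bit data slot with a negated inner $\haf_{r'}$ the $1$-sensitive bits of $\haf_{r'}$ turn into $0$-sensitive bits of $f$ of size $\approx 2^{r'}$.

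I would first verify the sensitivities by direct case analysis. A $1$-input has the form $(w_m, z)$ with $\haf_{r'}(z_m) = 0$. Flipping any of the $k_r = 2^r - 1$ address bits of $w$ moves $w$ off the codeword (by distance $3$), so $f$ becomes $0$; flipping the unique $\sens_0(\haf_{r'}) = 1$ sensitive bit of $z_m$ also yields $f=0$; all other flips leave $f$ unchanged. So $\sens_1(f) = 2^r$. A $0$-input in the ``codeword region'' $(w_m, z)$ with $\haf_{r'}(z_m) = 1$ is sensitive exactly at the $\sens_1(\haf_{r'}) = 2^{r'}$ sensitive bits of $z_m$; all other $0$-inputs have at most one sensitive bit. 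Hence $\sens_0(f) = 2^{r'}$.

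For $\lambda(f)$ I would compute the spectral norm of the biadjacency matrix $B$ of the bipartite graph $G_f$. Partition the $0$-side as $V_0^{\mathrm{in}} \cup V_0^{\mathrm{out}} \cup V_0^{\star}$, with $V_0^{\mathrm{in}}$ the codeword region having $\haf_{r'}(z_m) = 1$, and $V_0^{\mathrm{out}}$ the vertices whose $w$ lies at Hamming distance $1$ from a (unique, by distance $3$) codeword $w_m$ with $\haf_{r'}(z_m) = 0$; the remaining $0$-inputs are isolated. Writing $B = [B^{\mathrm{in}} \mid B^{\mathrm{out}}]$, the disjoint column supports give $BB^T = B^{\mathrm{in}}(B^{\mathrm{in}})^T + B^{\mathrm{out}}(B^{\mathrm{out}})^T$. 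Each summand is block-diagonal in $m$: within block $m$, $B^{\mathrm{in}}_m(B^{\mathrm{in}}_m)^T$ factors as $R_{\haf_{r'}}R_{\haf_{r'}}^T \otimes I$ with spectral norm $\lambda(\haf_{r'})^2 = 2^{r'}$, while $B^{\mathrm{out}}_m(B^{\mathrm{out}}_m)^T = k_r \cdot I$ since each $1$-input has exactly $k_r$ pairwise distinct outer neighbors. The top eigenvalue of the sum is therefore $2^{r'} + k_r$, giving $\lambda(f)^2 = 2^r + 2^{r'} - 1$.

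Plugging in $n = k_r + M_r \cdot n_{r'}$ yields $\log n = 2^r + 2^{r'} - O(\log\log n)$, so $\sens_0(f) + \sens_1(f) = (1 + o(1))\log n$ and $\lambda(f) = \sqrt{(1 + o(1))\log n}$. For a prescribed $c$ I would pick $r, r'$ so that $2^{r'}/(2^r + 2^{r'}) \to c$; for non-dyadic $c$ I would replace $\haf_r$ and $\haf_{r'}$ by shortened-Hamming versions of arbitrary integer address length $\ell_1, \ell_2$ (still distance $3$, with $2^{\ell_i - O(\log \ell_i)}$ codewords), obtaining continuous control $\sens_1(f) = \ell_1 + 1$, $\sens_0(f) = \ell_2 + 1$, $\lambda(f)^2 = \ell_1 + \ell_2 + 1$, with $\ell_2/(\ell_1 + \ell_2) \to c$. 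The main obstacle I expect is the spectral calculation: arguing that the inner and outer edge contributions to $BB^T$ combine with \emph{equality} rather than just an upper bound, which relies on the precise block-diagonal structure forced by the distance-$3$ property of the outer code.
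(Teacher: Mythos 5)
Your construction is the same one the paper starts from --- $\haf_{r_1}$ composed on its data bits only with $\neg\haf_{r_2}$ (the paper's $\circ'$) --- and your sensitivity analysis matches the paper's. You diverge in two places, both legitimately. First, to reach non-dyadic $c$ the paper keeps full-length Hamming codes and instead takes a conjunction $\chaf_{r_1,\ldots,r_l}$ of several of them, so the achievable sensitivities are sums $2^{r_1}+\cdots+2^{r_l}-l+1$, which after a common shift of all parameters become dense enough in ratio; you instead shorten the Hamming code to an arbitrary address length $\ell$, which preserves minimum distance $3$ and still leaves $2^{\ell-O(\log \ell)}$ codewords, giving direct integer control of $\sens_0(f)$ and $\sens_1(f)$. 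Your route is arguably cleaner and avoids introducing a second gadget. Second, for $\lambda(f)$ the paper decomposes the sensitivity graph into connected components (stars and two-layer stars) and solves a small symmetric eigenvalue system to get $\lambda(f)=\sqrt{\sens_0(f)+\sens_1(f)-1}$ exactly; you bound $\norm{BB^T}$ by splitting the edges into inner and outer contributions with disjoint $0$-side supports. The ``obstacle'' you flag at the end --- that the two contributions must combine with equality rather than just subadditively --- is not actually an obstacle: your own observation that $B^{\mathrm{out}}(B^{\mathrm{out}})^T = k_r I$ is a multiple of the identity on the $1$-inputs makes $\norm{BB^T}=\norm{B^{\mathrm{in}}(B^{\mathrm{in}})^T}+k_r$ immediate; and in any case only the upper bound is needed, since $\lambda(f)^2\geq\deg(f)\geq(1-o(1))\log n$ by Lemmas \ref{deg-bound} and \ref{lower-bound} already supplies the matching lower bound. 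Both spectral arguments are correct; the paper's component decomposition yields the exact top eigenvalue, while yours reaches the required asymptotic bound with less case analysis. The only items left implicit on your side are non-degeneracy and the endpoint cases $c\in\{0,1\}$, both routine.
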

\begin{proof}
    As the basis for our function we will take the composition of the Hamming address function with a negation of itself: $\haf_{r_1} \circ \neg \haf_{r_2}$.
    We perform two additional modifications to this.
    First, we do not compose the first $2^{r_1}-1$ bits of the outer function containing the codeword, but only the remaining $2^{2^{r_1}-r_1-1}$ data bits. Denote this new composition by $\circ'$.
    Then take $f=\haf_{r_1} \circ' \neg \haf_{r_2}$. 

    We have $\sens_0(\haf_{r_1})=1$. If this sensitive bit is in the codeword, the composition does not increase $0$-sensitivity, as we do not compose the codeword bits. If it is in a data bit, it will be multiplied by $\sens_0(\neg \haf_{r_2})=2^{r_2}$, giving $\sens_0(f)=2^{r_2}$.

    Conversely, $\sens_1(\haf_{r_1})=2^{r_1}$. However, only one of these sensitive positions will be a data bit and this data bit will have value $1$. But $\sens_1(\neg \haf_{r_2})=1$, therefore  $\sens_1(f)=2^{r_1}$.

    The total input size of $f$ is
    \[n=2^{r_1}-1+2^{2^{r_1}-r_1-1}\left(2^{r_2}-1+2^{2^{r_2}-r_2-1} \right). \]

    We can always add any value to both $r_1$ and $r_2$ to reduce the weight of lower order factors while maintaining the ratio between $s_0(f)$ and $s_1(f)$.
    Then $\log n = (1+o(1))(2^{r_1}+2^{r_2})=(1+o(1))(s_0(f)+s_1(f))$.
    This way we can also ensure $n>k$.
    There remains one problem: as both $s_0(f)$ and $s_1(f)$ are powers of $2$, we can construct such $f$ only for $c$ such that $c/(1-c) = 2^a$ for integer values of $a$.

\bigskip

    To remedy this, the second change we make is replacing the Hamming address function with a conjunction of several. Take a set of Hamming codes $H_{r_1}, \ldots, H_{r_l}$. A set of valid codewords for each of these codes together encodes a message from a set of size $t=2^{2^{r_1}-r_1-1} \cdot \ldots \cdot 2^{2^{r_l}-r_l-1}= (1+o(1))2^{2^{r_1}+\ldots+2^{r_l}} $. Then, as with the Hamming address function, for each message $m\in [t]$ we construct a 1-certificate $p_m$ with the corresponding codewords in the first $2^{r_1}+\ldots+2^{r_l}-l$ positions and a $1$ in the $m$-th of the remaining $t$ positions.
    
    Then define the conjunction Hamming address function: $\chaf_{r_1, \ldots, r_l}(x)=1$ iff $x$ satisfies one of these $p_m$.
    As each pair of these certificates will have different codewords in at least one of the Hamming codes, it is easy to check that $\sens_0(\chaf_{r_1, \ldots, r_l})=1$ and $\sens_1(\chaf_{r_1, \ldots, r_l})=2^{r_1}+\ldots+2^{r_l}-l+1$.
    We can now replace the functions in our composition with this conjunction version to obtain $f'=\chaf_{a_1,\ldots,a_l} \circ' \neg\chaf_{b_1, \ldots, b_m}$.

    Similarly to $f$,  we obtain $\sens_0(f')=2^{b_1}+\ldots+2^{b_m}-m+1$ and $\sens_1(f')=2^{a_1}+\ldots+2^{a_l}-l+1$.
    By adding the same value $g$ to all of $a_1,\ldots,a_l,b_1,\ldots, b_m$ we can again discard lower order factors to obtain a function with $\log n = (1+o(1))(s_0(f')+s_1(f'))$. By choosing appropriate values for $a_1,\ldots,a_l,b_1,\ldots, b_m$ we can obtain $s_0(f')/s_1(f') \to c/(1-c)$ for any $c$, as $g \to \infty$.

\bigskip     It remains to determine the value of $\lambda(f')$. It suffices to examine the connected components of the sensitivity graph separately. It turns out there are only two types of such components. Examine a $1$-input $x$ of $f'$. It will be sensitive to every address bit in the outer $\chaf$. The $0$-inputs obtained by flipping such an address bit will have a sensitivity of $1$. Additionally, a $1$-input for $f'$ must have a $0$-input for the inner $\chaf$ at the position pointed to by the address of the outer $\chaf$. 
    
    If this $0$-input of the inner function has no sensitive bits, the examined component is just a star graph of degree $s_1(f')-1$ with $x$ at the center. 
    For this star graph, the spectral norm of its adjacency matrix is $\sqrt{s_1(f')-1}$.

    If this $0$-input of the inner function has a sensitive bit, denote the input obtained by flipping this bit by $y$. Note that $f'(y)=0$ and $s(f',y)=s_0(f')$, as it is sensitive to changing the value of the inner function back to $0$. Each of the resulting $1$-inputs, including $x$, will have sensitivity $s_1(f')$. But they all share a single adjacent $0$-input with sensitivity greater than $1$, $y$. Therefore we obtain a two-layer star graph with $y$ at the center with degree $s_0(f')$ and the first layer consists of vertices with degree $s_1(f')$, see Figure \ref{fig:sens}.
    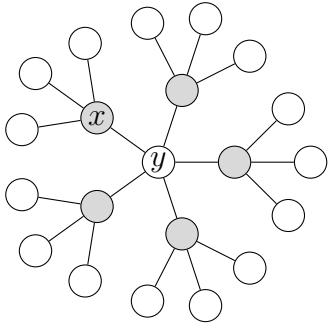
\begin{figure}[ht]
    \centering
    \begin{tikzpicture}[node distance=1.5cm, every node/.style={circle, draw, minimum size=4.25mm,inner sep=0}]

    \node (central) at (0,0) {$y$};
    
    \foreach \i in {1,...,5} {
        \ifnum\i=2
            \node[fill=gray!30] (layer1-\i) at (\i*72:1cm){$x$};
        \else 
            \node[fill=gray!30] (layer1-\i) at (\i*72:1cm){};
        \fi
        
        \draw (central) -- (layer1-\i); 
    }
    
    \foreach \i in {1,...,5} {
        \foreach \j in {1,...,3} {
            \node (layer2-\i-\j) at ($ (layer1-\i) + (\i*72+\j*45-90:1cm) $){};
            \draw (layer1-\i) -- (layer2-\i-\j); 
        }
    }
    
    \end{tikzpicture}
    \caption {The component type for the sensitivity graph of $\chaf_{a_1,\ldots,a_l} \circ' \neg\chaf_{b_1, \ldots, b_m}$ achieving maximal spectral sensitivity.}
    \label{fig:sens}
    \end{figure}
    
    Note that, for any eigenvalue of this component's adjacency matrix, there will be an eigenvector with equal weights for every vertex in each layer -- we can obtain such an eigenvector from a non-symmetric one by averaging it over all permutations of the children of each vertex (taking $y$ as the root). Then such an eigenvector is characterized by three parameters $w_y, w_1, w_2$, the weight at $y$ and the weight at the first and second layers of the star. If $w_2=0$, then so are $w_1,w_y$ and we get a $0$ vector. Otherwise, we can normalize it so that $w_2=1$. Then we obtain the following system of equations:
    \[
    \begin{cases}
        s_0(f')\cdot w_1=\lambda w_y, \\
        s_1(f')-1+w_y=\lambda w_1, \\
        w_1=\lambda.
    \end{cases}
    \]
Solving this system gives $\lambda(f')=\sqrt{s_0(f')+s_1(f')-1}=\sqrt{(1+o(1))\log n}$.
There are two solutions with an eigenvalue of this magnitude, one negative and one positive.
\end{proof}
\section{Acknowledgements}

We thank Manaswi Paraashar for introducing us to this problem, and Juris Smotrovs and Andris Ambainis for helpful discussions.

This work was supported by the Latvian Quantum Initiative under European Union Recovery and Resilience Facility project no.~2.3.1.1.i.0/1/22/I/CFLA/001.

\printbibliography

\end{document}